%        File: mendes2011lasso.tex
%     Created: Thu Mar 31 09:00 AM 2011 C
% Last Change: Thu Mar 31 09:00 AM 2011 C
%
\documentclass[11pt, twoside, a4paper]{article}
\usepackage{amsmath, amsthm, amsfonts}
\usepackage[round]{natbib}
\usepackage{fullpage}

% uncomment for reader devices
%\usepackage[paperwidth=135mm,paperheight=164.25mm,margin=2mm]{geometry}
\usepackage[kerning=true]{microtype}
\usepackage[utf8]{inputenc}
\usepackage[T1]{fontenc}
%\sloppy
%\frenchspacing
%\pagestyle{empty}

\newcommand{\R}{\mathbb{R}}

\newcommand{\E}{\mathbb{E}}

\newcommand{\bs}{\boldsymbol}
\newcommand{\sgn}{\mathrm{sgn}}
\newcommand{\diag}{\mathrm{diag}}

\newcommand{\trace}{\mathrm{trace}}
\newcommand{\cov}{\mathrm{cov}}
\newcommand{\var}{\mathrm{var}}

\newtheorem{theorem}{Theorem}
\newtheorem{definition}{Definition}
\newtheorem{assumption}{Assumption}
\newtheorem{lemma}{Lemma}

\newtheorem{proposition}{Proposition}

\title{Model Selection Consistency for Cointegrating Regressions}
\author{Eduardo F. Mendes\\Dep. of Statistics, Northwestern University}

\begin{document}
\maketitle
\begin{abstract}
We study the asymptotic properties of the adaptive Lasso in cointegration regressions in the case where all covariates are weakly exogenous. We assume the number of candidate $I(1)$ variables is sub-linear with respect to the sample size (but possibly larger) and the number of candidate $I(0)$ variables is polynomial with respect to the sample size. We show that, under classical conditions used in cointegration analysis, this estimator asymptotically chooses the correct subset of variables in the model and its asymptotic distribution is the same as the distribution of the OLS estimate given the variables in the model were known in beforehand (oracle property). We also derive an algorithm based on the local quadratic approximation and present a numerical study to show the adequacy of the method in finite samples. 
\end{abstract}

\section{Introduction}
With the increasing access to large datasets model selection has become a main issue in econometrics modeling and also in many other areas. This problem is traditionally attacked from one of the three perspectives: sequential tests, information theoretic criteria and model shrinkage. One can see that the first two are not well fitted for variable selection in higher dimensional settings and the later has not been well adapted to the problems we face in economic time series.

The sequential testing method works in a ``general-to-specific'' approach. One starts with a large model and sequentially eliminates unnecessary variables. A problem with this method is that when the number of regressors is large the performance of this method is severely compromised and multicolinearity and spurious correlation are a huge issue. The information criteria approach works by assigning weights to the models and then by minimizing some risk function among the candidate models. In a variable selection context, one wants to choose the best subset of variables, which leads to estimating approximately $10^{p/3}$ distinct models, and choose the best one according to some risk function. Clearly this method quickly becomes not feasible and alternative methods, such as greedy model selection is used instead. Greedy model selection, or sequential model selection, is not consistent and frequently choose a local minima among all models.

Another problem that model selection in high-dimension faces is that when the number of candidate variables is greater than the number of observations, estimating the model is not feasible because the parameters are not identifiable. Model shrinkage, which has been successfully used in several areas, including computer science and genomics. The idea is to shrink to zero the coefficients that do not matter in the regression leaving only the ``relevant'' ones to be estimated. One of the consequences is that only a subset of variables are actually estimated and therefore we are able to handle more variables than observations. Among shrinkage methods the Lasso, introduced by \citet{tibshirani1996}, has received much attention and several extensions have been developed, e.g. \citet{hastiezou2005}, \citet{zou2006} and \citet{yuanlin2006} among many others. 

The Lasso estimator is given by
\begin{equation}
	\hat{\theta} = \arg\min_\theta \|Y-X\theta\|_2^2 + \lambda\|\theta\|_1,
	\label{eq:lasso}
\end{equation}
where $\theta$ is a $p\times 1$ parameter vector, $Y$ is the dependent variable and $X$ is the data matrix. It can be shown that its entire regularization path can be efficiently computed \citep{efronetal2004}, can handle more covariates than observations and under some conditions can choose the correct subset of relevant variables \citep{zhaoyu2006,wainwright2006,meinshausenbuhlmann2006,meinshausenyu2009}, however it is not consistent in general and provide biased estimates for the non-zero parameters \citep{fanli2001, knightfu2000, zou2006}. \citet{zou2006} proposed a modification that has the ``oracle'' property, meaning that the estimator of the non-zero parameters have the same distribution as if we knew them beforehand. This modification led to the adaptive Lasso given by
\begin{equation}
	\hat{\theta} = \arg\min_\theta \|Y-X\theta\|_2^2 + \lambda\sum_{j=1}^p\lambda_j|\theta_j|,
	\label{eq:classic-adalasso}
\end{equation}
where the weights $\lambda_j = |\hat{\theta}^*_j|^{-\rho}$, $0<\rho \le1$, with $\hat{\theta}_j^*$ a consistent estimate of the true parameter $\theta_{0j}$.

Extension of shrinkage estimators for the case the number of candidate variables $n$ is possible much larger than the sample size often require the ``partial ortogonality condition'' which states that the variables that do not enter in the model are only weakly correlated with the variables that enter in the model \citep{huangetal2008,huangetal2009}, or the ``Irrepresentable Condition'' which states that the coefficients of the linear regression of the variables that enter the model onto the variable that do not enter the model is bounded by $1$ \citep{zou2006,zhaoyu2006,meinshausenbuhlmann2006}.

Despite all these effort in understanding and adapting the Lasso to distinct cases, most advances are only valid for the classical i.i.d. regression framework, most often with fixed design. Little or effort has been given to time series or weekly dependent case, which is the prevalent case in economic series. \citet{wangetal2007} use a Lasso-based method to choose the autoregressive order of a regression; \citet{hsuetal2008} apply the Lasso method to choose the variables in a vector autoregressive models; \citet{caner2009} applies the Lasso method to choose variables in a weakly dependent GMM framework; \citet{canerknight2008} use a bridge estimator to find the integration order of a vector; and \citet{liaophillips2010} for selecting variables and order of integration in an error correction models. All those papers suffers from the same drawback that is the number of candidate variables (or respectivelly the total numbe of parameters for the vector case) have to be smaller than the sample size. \citet{songbickel2011} provide new results allowing the number of variables to increase with the sample size and be possibly larger than it. Such techniques have also been used in applied research in more general frameworks. For instance, \citet{baing2008} use Lasso-related techniques for factor forecasting, but since prediction is their ultimate goal (as opposed to variable selection), what matters is how ordered predictors affect the forecasts as opposed to how you choose the variables.

In this paper we discuss an extension of the adaptive Lasso to a (possibly) cointegrated regression with explanatory stationary variables, and show model selection consistency and oracle property for the method. We allow the model to select both the stationary and non-stationary variables in the regression. One problem in extending Lasso to cointegrated regressions is that the $I(1)$ and $I(0)$ parameters converge at distinct rates. We overcome this problem by setting regularization parameter for the $I(1)$ variables to be proportional to the square of the $\lambda$ for the stationary variables. We also relax the need of a ``zero-consistent estimator`` in \citet{huangetal2008}, imposing a weaker form of the ''Irrepresentable Condition``.

Throughout the paper we assume it is already known the order of integration of the dependent and independent variables. We consider the case where the actual number of $I(1)$ variables in the model, $q_1$, is fixed, but the number of $I(0)$ variables in the model, $q_2$, can increase with $T$. Moreover, the total number of candidate $I(1)$ variables is sub-linear with respect to the sample size $T$, meaning that the number of candidate variables $n_1$ is $o(T)$, but possibly larger than $T$. This last condition can be relaxed if more structure is imposed on the error term of the regression, and we can achieve a rate for $n$ as big as $o(e^{T^\delta})$, for some $0\le\delta\le1$ \citep{huangetal2008}. Similarly, the number of candidate $I(0)$ variables, $n_2$, is $o(T^d)$, for some $d\ge 1$. The results in this paper can also be extended to the (finite) vector-case and also (independent) panel data models.

One of the most straightforward application of this result is to understand the shift in prices of financial objects (financial portfolio construction). The prices are known to be $I(1)$ and number of financial objects that might of interest is large and include both $I(1)$ and $I(0)$ variables. Another interesting framework is the evolution of macroeconomic time series, as in \citet{stockwatson2002}. The number of predictors can be very large and an efficient method for choosing the relevant ones is necessary. Another application of this method is to choose the number of lags in a Autoregressive Distributed Lags (ADL) model.

In section \ref{model} we present the proposed model selection method. Section \ref{asymp} presents the main results of the paper. Section \ref{algo-sim} shows the algorithm for estimating the parameters and a Monte Carlo study to evaluate the performance of the method in finite samples. We close the paper with some final remarks in section \ref{conclusion}. The proof of the main results are delayed to the appendix.

\section{Penalized Cointegration}\label{model}

Let $\{y_t\}_1^\infty$ denote an scalar time series generated by
\begin{equation}
	y_t  = \alpha_0 + \beta_0'x_t + \gamma_0' z_t + u_t
	\label{eq:modely}
\end{equation}
where $\alpha_0$ is a scalar, $\beta_0$ is $n_1\times 1$, and $\gamma_0$ is $n_2\times1$, with the index $\cdot_0$ meaning ``true''. The process $\{x_t\}_1^\infty$ satisfies
\begin{equation}
	x_t = x_{t-1} + v_t,
	\label{eq:modelx}
\end{equation}
the process $\{z_t\}_1^\infty$ has mean zero and is weakly stationary, and $\{u_t\}_1^\infty$ and $\{v_t\}_1^\infty$ are weakly stationary error processes. Also, the following assumption hold for the vector $w_t = (u_t, v_t', z_t)'$
\begin{assumption}[DGP]\label{a:ip}
	The vector process $\{w_t\}_1^\infty$ satisfy the following assumptions
	\begin{enumerate}
		\item $\E w_t = 0$ for $t=1,2,\dots$;
		\item $\{w_t\}_1^\infty$ is weakly stationary;
		\item for some $d>1$
			\begin{itemize}
				\item $\E|w_t|^{2d} < \infty$ for $t=1,2,\dots$; and
				\item the process $\{w_t\}_1^\infty$ is either $\phi$-mixing with rate $1-1/(2d)$, or $\alpha$-mixing with rate $1-1/d$.
			\end{itemize}
		\item The process $\{u_t\}_1^t$ is uncorrelated with $\{v_t\}_1^t$ and $\{z_t\}_1^t$, for $t=1,2,\dots$
		\item Define $S_t = \sum_1^T w_tw_t'$. Then
			\[
			\begin{array}{ll}
				\lim_{T\rightarrow\infty} T^{-1}\E S_tS_t' &= \E w_1w_1' + \sum_{t=1}^\infty\E[w_1w_t'+w_tw_1']\\ 
				&= \Sigma + \Lambda + \Lambda'\\
				&= \Sigma^*.
			\end{array}.
			\]
		\item $\max_{j=1,\dots,n_2}\left(\E\left(|T^{1/2}\sum_{t=1}^Tz_{jt}u_t|\right)^{2d}\right)^{1/d}\le c_d < \infty$
		\item if $q_2\rightarrow\infty$, $\max_{1\le i\le j \le q_2} \E\left(T^{-1/2}\sum_{t=1}^tz_{it}z_{jt}-\E(z_{it}z_{jt})\right)^2 \le c_s <\infty$
		\item the eigenvalues of the matriz $\Sigma^*_{Z(1)^2}$ (the part of $\Sigma^*$ corresponding to the variables $z$ that enter in the model) are bounded between $\tau_*$ and $\tau^*$.
	\end{enumerate}
\end{assumption}

The set of assumptions (1)--(5) is common in cointegration regression. Assumptions (6) and (7) are required to control the number of $I(0)$ variables in the model. In particular, \citet{phillipsdurlauf1986} make de same set of assumptions (1-5) to derive asymptotic properties of multiple regressions with integrated processes. This assumption is required to ensure that the Invariance Principle holds. A weaker set of assumptions, using mixingales, could be used instead \citep{dejongdavidson2000}, but we decided to use the classical set of assumptions for sake of simplicity (and clarity since these are the most commonly used). The number of finite moments $d$ is directly related to the order of increase of candidate variables in the model.

In this work, we assume that $n\equiv n_T=n1+n2$ is possibly greater than $T$, but only a fraction of these coefficients are in fact nonzero. Without any loss of generality we assume each coefficient vectors can be partitioned into zero and non-zero coefficients, i.e. $\beta_0 = (\beta_0(1)',\beta_0(2)')'$ and $\gamma_0 = (\gamma_0(1)',\gamma_0(2)')'$, with all non-zero coefficients stacked first, where $\beta_0(1)$ is $q_1\times 1$, and $\gamma_0(1)$ is $q_2\times1$. We assume $q_1$ is fixed (do nor depend on $T$) and $q_2$ may depend on T, also set $q=q_1+q_2$. For matter of convenience, denote $m_1 = n_1-q_2$ and $m_2=n_2-q_2$. Denote by upper case letters the data matrices and allow splitting these matrices in the same way we did with the coefficients, for instance $Z = (z_1,\dots,z_T)' = (Z(1),Z(2))$ and $X = (x_1,\dots,x_T)' = (X(1),X(2))$. 

The Adaptive Lasso estimate in our case is given by
\begin{equation}
	(\hat\beta,\hat\gamma) = \arg\min_{\beta,\gamma} \|Y-X\beta-Z\gamma\|_2^2 + \lambda_1\sum_{j=1}^{n_1}\lambda_{1j}|\beta_j| + \lambda_2\sum_{j=1}^{n_2}\lambda_{2j}|\gamma_j|,
	\label{eq:adalasso}
\end{equation}
where $\{\lambda_1,\lambda_{11},\dots,\lambda_{1n_1},\lambda_2,\lambda_{21},\dots,\lambda_{2n_2}\}$ are regularization parameters satisfying a set of conditions defined later, and $\|\cdot\|_2^2$ denote the $L_2$-vector norm. Following \citet{zou2006}, we take $\lambda_{1j} = |\hat\beta^*_j|^{-\rho}$ and $\lambda_{2j} = |\hat\gamma^*_j|^{-\rho}$, where $\hat\beta^*_j$ and $\hat{\gamma}^*_j$ are estimators of $\beta_{0j}$ and $\gamma_{0j}$; and $0\le\rho<1$.

We assume without loss of generality that the true intercept $\alpha_0=0$ is known. This assumption does not change our results since we are interested in the behavior of the selection procedure. We make the following regularity assumptions about the parameter space $\Theta_n$ and the true vector of parameters $\theta_0 = (\beta_0', \gamma_0')'$.
\begin{assumption}\label{a:param}
	(i) The true parameter vector $\theta_0$ is an element of an open subset $\Theta_n\subset \R^{n}$ that contains the element $\bs{0}$. (ii) $\min\beta_0(1)\ge\beta_*$ and $\min\gamma_0(1))\ge\gamma_*$.
\end{assumption}

The minimization problem in (\ref{eq:adalasso}) is equivalent to a constrained concave minimization problem, and necessary and (almost) sufficient conditions \citep{zhaoyu2006} for the existence of a solutions can be derived satisfying the Karush-Kuhn-Tucker (KKT) conditions. This approach has been applied in several papers including \citet{wainwright2006}, \citet{zhaoyu2006}, \citet{zou2006} and \citet{huangetal2008}, and lead to a necessary condition frequently denote in the literature by \textit{Irrepresentable Condition} (IC). This condition is know to be easily violated in the presence of highly correlated variables \citep{zhaoyu2006,meinshausenyu2009}. \citet{meinshausenyu2009} examine the performance of the Lasso estimate in the case this condition is violated. A more comprehensive discussion about the IC and comparison with other conditions can be found in \citet{zhaoyu2006} and \citet{meinshausenyu2009}, section 1.5. 

In opposition to \citet{zou2006} and \citet{huangetal2008}, who assume one has consistent zero-estimators of the parameters $\theta_{0}(2)$, we do not assume such estimators are available; instead, we assume a weaker form of the Irrepresentability Condition denoted \textit{Weak Irrepresentability Condition} (WIC). This condition reduces to the IC if we have $P\left(\min_{ q_1+1\le j \le n_1}\lambda_{1j}=|\beta_*|^{-1}\right)\rightarrow1$ and $P\left(\min_{q_2+1\le j\le n_2}\lambda_{2j} = |\gamma_*|^{-1}\right)\rightarrow1$; and is equivalent to zero-consistency if $\lambda_{1j}$ and $\lambda_{2j}$ diverge as $T$ increase. One should expect to be in the between most of the time rendering this condition less restrictive than both IC and zero-consistency. \textit{Weak Irrepresentability Condition} also implies that we do not need consistent estimators of $\theta_0(2)$ anymore to construct $\lambda_{ij}$, $i=1,2$ and $j=q_i+1,\dots,n_i$, rather we can use biased estimators such as ridge estimators.

\begin{lemma}[KKT Conditions]
	The solution $\hat{\beta}=(\hat{\beta}(1)',\hat{\beta}(2)')'$ and $\hat{\gamma} =(\hat{\gamma}(1)',\hat{\gamma}(2)')'$ to the minimization problem (\ref{eq:adalasso}) exists if:
	\begin{subequations}\label{eq:KKT1}
		\begin{align}
			\frac{\partial \|Y-X\beta-Z\gamma\|_2^2}{\partial \beta_j(1)}\Big\vert_{\beta_j(1)=\hat{\beta}_j(1)} = \sgn(\hat{\beta}_j(1))\lambda_1\lambda_{1j} \label{eq:KKT1a}\\
			\frac{\partial \|Y-X\beta-Z\gamma\|_2^2}{\partial \gamma_j(1)}\Big\vert_{\gamma_j(1)=\hat{\gamma}_j(1)} = \sgn(\hat{\gamma}_j(1))\lambda_2\lambda_{2j} \label{eq:KKT1b}
		\end{align}
	\end{subequations}
	and
	\begin{subequations}\label{eq:KKT2}
		\begin{align}
			\frac{\partial \|Y-X\beta-Z\gamma\|_2^2}{\partial \beta_j(2)}\Big\vert_{\beta_j(2)=\hat{\beta}_j(2)}\leq \lambda_1\lambda_{ij} \label{eq:KKT2a}\\
			\frac{\partial \|Y-X\beta-Z\gamma\|_2^2}{\partial \gamma_j(2)}\Big\vert_{\gamma_j(2)=\hat{\gamma}_j(2)} \leq \lambda_2\lambda_{2j} \label{eq:KKT2b}.
		\end{align}
	\end{subequations}

	\label{l:KKT}
\end{lemma}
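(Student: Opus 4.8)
The plan is to treat (\ref{eq:adalasso}) as an unconstrained convex program and to invoke Fermat's rule for subdifferentials, since the objective
\[
L(\beta,\gamma) = \|Y-X\beta-Z\gamma\|_2^2 + \lambda_1\sum_{j=1}^{n_1}\lambda_{1j}|\beta_j| + \lambda_2\sum_{j=1}^{n_2}\lambda_{2j}|\gamma_j|
\]
is convex: the first term is a squared norm composed with an affine map, hence convex, and each penalty term is a nonnegative multiple of $|\cdot|$, which is convex, so $L$ is a sum of convex functions. First I would record the standard fact that, for convex $L$, a point $(\hat\beta,\hat\gamma)$ is a global minimizer if and only if $\bs 0\in\partial L(\hat\beta,\hat\gamma)$, where $\partial$ denotes the subdifferential; this is what turns the existence of a minimizer into the verification of the displayed conditions.

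Next I would split $L=g+h$, with $g(\beta,\gamma)=\|Y-X\beta-Z\gamma\|_2^2$ smooth and $h$ the coordinate-separable penalty. Because $g$ is differentiable everywhere, the sum rule gives $\partial L=\nabla g+\partial h$, so the optimality condition decouples coordinate by coordinate. Writing $\partial|\theta_j|=\{\sgn(\theta_j)\}$ when $\theta_j\neq 0$ and $\partial|\theta_j|=[-1,1]$ when $\theta_j=0$, the condition $\bs 0\in\partial L$ reads, for each $\beta$-coordinate, $\partial g/\partial\beta_j+\lambda_1\lambda_{1j}\,s_j=0$ for some $s_j\in\partial|\beta_j|$, and analogously for $\gamma$.

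I would then read off the two cases. For the active coordinates---those collected in $\hat\beta(1)$ and $\hat\gamma(1)$, which are nonzero---the subdifferential is the singleton $\sgn(\cdot)$, and the stationarity equation becomes the equalities (\ref{eq:KKT1a})--(\ref{eq:KKT1b}) (up to the sign convention used in writing the derivative of the residual sum of squares). For the inactive coordinates---those in $\hat\beta(2)$ and $\hat\gamma(2)$, set to zero---the requirement that there exist $s_j\in[-1,1]$ with $\partial g/\partial\beta_j=-\lambda_1\lambda_{1j}s_j$ is equivalent to $|\partial g/\partial\beta_j|\le\lambda_1\lambda_{1j}$, which is exactly (\ref{eq:KKT2a})--(\ref{eq:KKT2b}), the inequalities being understood on the absolute value of the gradient. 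Assembling the two cases yields the stated conditions.

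The only genuinely delicate points are the handling of the nonsmooth penalty and the existence/uniqueness bookkeeping. One cannot differentiate $|\cdot|$ at the origin, so the argument must pass through subgradient calculus rather than ordinary first-order conditions, and this is precisely where the inequality form in (\ref{eq:KKT2}) originates. I would also flag that, because $n$ may exceed $T$, the matrix $(X,Z)$ need not have full column rank, so $g$ is only convex (not strictly convex) and the minimizer need not be unique; the lemma therefore asserts that \emph{a} solution with the indicated support exists once the displayed conditions hold, and it implicitly uses that the adaptive weights $\lambda_{1j}=|\hat\beta^*_j|^{-\rho}$ and $\lambda_{2j}=|\hat\gamma^*_j|^{-\rho}$ are strictly positive and finite so that the penalty is well defined on the relevant coordinates.
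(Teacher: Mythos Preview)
Your proposal is correct and is essentially the same approach the paper takes: the paper's own proof is a single sentence stating that the lemma is merely the standard KKT conditions specialized to this problem. Your write-up supplies the subdifferential details the paper omits, but the underlying argument is identical.
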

\begin{proof}
The proof of this lemma is simply the statement of the KKT conditions adapted to our problem.
\end{proof}

Following \citet{zhaoyu2006}, model selection consistency is equivalent do \textit{sign consistency}. We say that $\hat{\theta}$ equals in sign to $\theta$ if $\sgn(\hat{\theta}) = \sgn(\theta)$, and we represent this equality of signs by $\hat{\theta} =_s \theta$.
\begin{definition}[Sign Consistency]
	We say that an estimate $\hat{\theta}$ is sign consistent to $\theta$ if
	\[
	\Pr(\hat{\theta} =_s \theta) \rightarrow 1\quad\mbox{, as } n\rightarrow \infty.
	\]
\end{definition}

\citet{zhaoyu2006} refer to this kind of consistency as \textit{strong sign consistency}, meaning that one can use a pre-selected regularization parameter to achieve sign consistency, as opposed to \textit{general sign consistency} which states that for a random realization there exists a amount of regularization that selects the true model. 

Before stating the IC to our problem, we have to introduce some more notation. Let $W(1) = (X(1), Z(1))$, $W(2) = (X(2), Z(2))$ and $W = (W(1)W(2))$, then $\Omega = \Gamma^{-1/2}W'W\Gamma^{-1/2}$ can be divided into four blocks, $\Omega_{11} = \Gamma_1^{-1/2}W(1)'W(1)\Gamma_1^{-1/2}$, $\Omega_{21} = \Gamma_2^{-1/2}W(2)'W(1)\Gamma^{-1/2}_2$, $\Omega_{12}$ and $\Omega_{22}$. The normalization matrix $\Gamma$, is also divided in $\Gamma^{1/2}_1 = \diag(T\bs{1}_{q_1}',\sqrt{T}\bs{1}_{q_2})$ and $\Gamma^{1/2}_2=\diag(T\bs{1}_{n_1-q_1}',\sqrt{T}\bs{1}_{n_2-q_2}')$ are the following

\begin{assumption}[Weak Irrepresentable Condition]\label{a:ic}
	The matrix $\Omega_{11}$ is invertible, and for some $0<\eta<1$,  
	\[
	P\left(\bigcap_{1\le j\le m_1}\left\{\left[\big|[\Omega_{21}\Omega_{11}^{-1}]\sgn(\theta_0(1))\big|\right]_j \le \beta_*\lambda_{1j}-\eta\right\}\right)\rightarrow 1,
	\]
	and
	\[
	P\left(\bigcap_{m_1+1\le j\le m_1+m_2}\left\{\left[\big|[\Omega_{21}\Omega_{11}^{-1}]\sgn(\theta_0(1))\big|\right]_j \le \gamma_*\lambda_{2j}-\eta\right\}\right)\rightarrow 1,
	\]
	where $[\cdot]_j$ denotes the $j^{th}$ element of the vector inside brackets.
\end{assumption}

Next proposition (similar to proposition 1 in \citet{huangetal2008}) provides some lower bounds on the probability of Adaptive Lasso choosing the correct model.

\begin{proposition} Let $\lambda = \diag(\lambda_1\bs{1}_{h_1},\lambda_2\bs{1}_{h_2})$, where the dimensions $h_1$ and $h_2$ are adapted to each case it appears,  $L(1) = \diag(\lambda_{11}, \dots, \lambda_{1q_1}, \lambda_{21}, \dots,\lambda_{2q_2})$ and $L(2) = \diag(\lambda_{1q_1+1}, \dots, \lambda_{1n_1}, \lambda_{2q_2+1}, \dots, \lambda_{2n_2})$. Then

	\[
	\Pr\left( \hat\theta=_s \theta_0\right) \ge \Pr\left( \mathcal{A}_T\cap\mathcal{B}_t \right),
	\]
where

	\begin{subequations}\label{eq:necessary}
		\begin{align}
			\mathcal{A}_T &= \left\{ \Gamma^{-1/2}|\Omega_{11}^{-1}W(1)'U| < \Gamma^{1/2}|\theta_0(1)|-\frac{1}{2}\Gamma^{-1/2}\lambda|\Omega_{11}^{-1}L(1)\sgn(\theta_0(1))| \right\}\\
			\mathcal{B}_T &= \left\{ 2|\Gamma^{-1/2}W(2)'M(1)U| < \Gamma^{-1/2}\lambda L(2)\bs{1}_{n-q}-\Gamma^{-1/2}\lambda|\Omega_{21}\Omega_{11}^{-1}L(1)\sgn(\theta_0(1))|\right\},
		\end{align}
	\end{subequations}
	where $M(1) =\bs{I}_T-W(1)(W(1)'W(1))^{-1}W(1)'$ and the previous inequalities hold element-wise.
	\label{p:necessary}
\end{proposition}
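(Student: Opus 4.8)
The plan is to exploit convexity. The objective in (\ref{eq:adalasso}) is a quadratic plus a weighted $\ell_1$ penalty, hence convex, so the KKT relations of Lemma \ref{l:KKT} are not merely necessary but \emph{sufficient} for global optimality. I would therefore use a witness construction: build a candidate vector supported exactly on the true active set, check that it satisfies all of (\ref{eq:KKT1})--(\ref{eq:KKT2}) with the correct signs, and conclude by convexity that it must coincide with $\hat\theta$. Concretely, set $\hat\theta(2)=\bs{0}$ and let $\hat\theta(1)$ solve the active-set equalities (\ref{eq:KKT1a})--(\ref{eq:KKT1b}) under the provisional assumption $\sgn(\hat\theta(1))=\sgn(\theta_0(1))$. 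The events $\mathcal{A}_T$ and $\mathcal{B}_T$ will turn out to be exactly what makes this construction internally consistent.

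First I would compute $\hat\theta(1)$ explicitly. Writing $W=(W(1),W(2))$, $\theta=(\beta',\gamma')'$, and using that the true model has $\theta_0(2)=\bs{0}$ so that $Y=W(1)\theta_0(1)+U$, the gradient $\partial\|Y-W\theta\|_2^2/\partial\theta(1)=-2W(1)'(Y-W(1)\hat\theta(1))$ together with (\ref{eq:KKT1a})--(\ref{eq:KKT1b}) yields
\[
\hat\theta(1)-\theta_0(1)=(W(1)'W(1))^{-1}W(1)'U+\tfrac12(W(1)'W(1))^{-1}\lambda L(1)\sgn(\theta_0(1)).
\]
Substituting $W(1)'W(1)=\Gamma_1^{1/2}\Omega_{11}\Gamma_1^{1/2}$ and pre-multiplying by $\Gamma_1^{1/2}$ separates the stochastic term $\Omega_{11}^{-1}\Gamma_1^{-1/2}W(1)'U$ from the deterministic penalty bias; since $\Gamma_1^{-1/2}$ is diagonal with positive entries it commutes through $|\cdot|$. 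Because correct sign recovery on the active set is implied component-wise by $|\hat\theta_j(1)-\theta_{0j}(1)|<|\theta_{0j}(1)|$, bounding the stochastic term by $|\theta_0(1)|$ minus the penalty bias (all measured at the $\Gamma$-scale) is precisely $\mathcal{A}_T$, and on $\mathcal{A}_T$ the provisional sign assumption is self-consistent.

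Next I would verify the inactive inequalities (\ref{eq:KKT2a})--(\ref{eq:KKT2b}). Plugging the formula for $\hat\theta(1)-\theta_0(1)$ into the residual gives
\[
Y-W(1)\hat\theta(1)=M(1)U-\tfrac12 W(1)(W(1)'W(1))^{-1}\lambda L(1)\sgn(\theta_0(1)),
\]
with $M(1)=\bs{I}_T-W(1)(W(1)'W(1))^{-1}W(1)'$. Multiplying by $-2W(2)'$ and using $W(2)'W(1)(W(1)'W(1))^{-1}=\Gamma_2^{1/2}\Omega_{21}\Omega_{11}^{-1}\Gamma_1^{-1/2}$, the inactive gradient equals $-2W(2)'M(1)U+\Gamma_2^{1/2}\Omega_{21}\Omega_{11}^{-1}\lambda L(1)\sgn(\theta_0(1))$. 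An inequality $|a-b|\le c$ is guaranteed by the stronger but cleaner $|a|+|b|\le c$; writing this out after the $\Gamma_2^{-1/2}$ rescaling (which cancels the $\Gamma_2^{1/2}$ on the deterministic term) and moving the bias to the right is exactly $\mathcal{B}_T$, whose strictness forces $\hat\theta(2)=\bs{0}$ with slack.

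Finally, on $\mathcal{A}_T\cap\mathcal{B}_T$ the constructed vector satisfies every KKT relation with $\sgn(\hat\theta(1))=\sgn(\theta_0(1))$ and $\hat\theta(2)=\bs{0}$, so by convexity it is the adaptive Lasso solution and $\hat\theta=_s\theta_0$; monotonicity of probability then gives the stated bound. I expect the main obstacle to be the self-consistency (primal--dual witness) step: one must argue that the sign pattern used to \emph{derive} $\hat\theta(1)$ is genuinely \emph{implied} by $\mathcal{A}_T$ rather than merely assumed, while simultaneously tracking the two different normalization rates ($T$ for the $I(1)$ block and $\sqrt{T}$ for the $I(0)$ block) carried by $\Gamma$, so that the stochastic and penalty contributions are compared at the correct scale in both $\mathcal{A}_T$ and $\mathcal{B}_T$.
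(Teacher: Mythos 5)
Your argument is correct and is precisely the standard primal--dual witness argument behind this result; the paper itself gives no proof, attributing the proposition to Proposition~1 of \citet{huangetal2008} (following \citet{zhaoyu2006}), and that proof proceeds exactly as you describe: build the candidate from the active-set KKT equalities with $\sgn(\theta_0(1))$ inserted, observe that $\mathcal{A}_T$ forces the candidate's signs to equal $\sgn(\theta_0(1))$ so the construction is self-consistent, and relax the inactive-set subgradient bound by the triangle inequality to obtain $\mathcal{B}_T$. The one point worth making explicit is that strict inequality in $\mathcal{B}_T$ together with invertibility of $\Omega_{11}$ (Assumption~\ref{a:ic}) makes the witness the unique minimizer, so that \emph{the} adaptive Lasso estimate --- not merely \emph{some} solution --- is sign-consistent on $\mathcal{A}_T\cap\mathcal{B}_T$.
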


\section{Model Selection Consistency and Oracle Property}\label{asymp}

In this section we derive the main results of the paper. We show that, under some conditions on $n$, $p$, and $\lambda$'s the Adaptive Lasso selects the correct subset of variables (sign consistency) and it has the oracle property in the sense of \citet{fanli2001}, meaning that our estimate has the same asymptotic distribution of the OLS as if we knew beforehand what variables are in the model and at optimal rate. A straightforward conclusion is that we can carry out hypothesis tests about the parameters in a traditional way, i.e. as if we assume we have the true model.

In our case, the number of variables $q=q_1+q_2$ that actually enter in the model can grow polynomially with $T$, more precisely the number of $I(1)$ variables $q_1$ in the model is finite while the number of $I(0)$ variables in the model can increase polynomially. The number of candidate variables $n=n_1+n_2$ increase with $T$ (both $n_1$ and $n_2$ increase with $T$ at distinct rates) and is possibly larger than the sample size. The next assumption give sufficient conditions for model selection consistency.
\begin{assumption}\label{a:lasso}
The follow assumptions hold jointly for some fixed $0<\rho\le 1$ :
	\begin{enumerate}
		\item $\lambda_1 \rightarrow \infty$ and $\lambda_1/T^{1+\rho} \rightarrow 0$;
		\item $\lambda_2 \rightarrow \infty$ and $\lambda_2/T^{(1+\rho)/2}\rightarrow 0$;
		\item $q_1=O(1)$ and $q_2=o(T^{d/(2d+1)})$;
		\item $m_1=o(T^2/\lambda_1^2)$ and $m_2=o(T^d/\lambda_2^2))$.
	\end{enumerate} 
\end{assumption}

This assumption tells us that the number of variables is sub-linear with respect to the sample size $T$, however this assumption can be relaxed at a cost of more structure about the tails of the error term. 

\begin{assumption} \label{a:adalasso}
The following assumptions hold jointly for some fixed $0<\rho\le 1$:
\begin{enumerate}
	\item There exist constants $\beta_*$ and $\gamma_*$ such that:
		\begin{enumerate}
			\item[(i)] $\Pr(\max_{1\le j \le q_1}\lambda_{1j}<\beta_*^{-1}) \rightarrow 0$; 
			\item[(ii)] $\Pr(\max_{1\le j \le q_2}\lambda_{2j}<\gamma_*^{-1}) \rightarrow 0$;
		\end{enumerate}
	\item There exists stationary processes $V_{1j}$, $j=1,\dots,q_1$, and $V_{2j}$, $j=1,\dots,q_2$ such that:
		\begin{enumerate}
			\item[(i)] $T^{\rho}\lambda_{1j} \Rightarrow V_{1j}$;
			\item[(ii)] $T^{\rho/2}\lambda_{1j} \Rightarrow V_{2j}$.
		\end{enumerate}
\end{enumerate}
\end{assumption}

The first assumption requires the weights $\lambda_{1}(1)$ and $\lambda_2(1)$ to be bounded from below with probability tending to 1. The last assumption is required for the oracle property and tells us that the data dependent weights ave to converge at a given rate for the adaptive Lasso to be oracle.

\begin{theorem}[Model Selection Consistency]
	Under assumptions \ref{a:ip} -- \ref{a:adalasso}, 
	
	\[
	P(\hat{\theta}=_s\theta_0) \rightarrow 1.
	\]
	\label{thm:modsel}
\end{theorem}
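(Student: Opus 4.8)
The plan is to reduce everything to the lower bound furnished by Proposition~\ref{p:necessary}. Since
$\Pr(\hat\theta =_s \theta_0) \ge \Pr(\mathcal{A}_T \cap \mathcal{B}_T) \ge 1 - \Pr(\mathcal{A}_T^c) - \Pr(\mathcal{B}_T^c)$,
it suffices to show $\Pr(\mathcal{A}_T^c)\to 0$ and $\Pr(\mathcal{B}_T^c)\to 0$, where $\mathcal{A}_T$ and $\mathcal{B}_T$ are the events in \eqref{eq:necessary}. The two play different roles: $\mathcal{A}_T$ guarantees that the estimates of the $q$ truly nonzero coefficients keep the correct sign, and involves only the fixed-dimensional block indexed by the true model; $\mathcal{B}_T$ guarantees that none of the $n-q$ truly zero coefficients is activated, and here the dimension grows with $T$. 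The normalization $\Gamma^{1/2}$ separates the $I(1)$ and $I(0)$ coordinates, so throughout I would treat the two blocks in parallel, invoking the functional invariance principle for the integrated components and the central limit / law of large numbers behaviour for the stationary components, both delivered by Assumption~\ref{a:ip}.

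For $\mathcal{A}_T$ the dimension is fixed at $q$, so no uniform control is needed and it is enough to show that each coordinate of the defining inequality holds with probability tending to one. First I would argue that the left-hand noise term $\Gamma^{-1/2}|\Omega_{11}^{-1}W(1)'U|$ is $O_p(1)$: under Assumption~\ref{a:ip} the normalized Gram matrix $\Omega_{11}$ converges to an almost surely invertible limit (the integrated block to a random matrix of the form $\int BB'$, the stationary block to a positive-definite constant), which together with Assumption~\ref{a:ic} keeps $\Omega_{11}^{-1}=O_p(1)$, while $\Gamma^{-1/2}W(1)'U$ converges in distribution thanks to the exogeneity in Assumption~\ref{a:ip}(4). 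Next I would show that the right-hand side is dominated by its leading term $\Gamma^{1/2}|\theta_0(1)|$, which grows like $T$ on the $I(1)$ coordinates and like $\sqrt T$ on the $I(0)$ coordinates because $\theta_0(1)$ is bounded away from zero by Assumption~\ref{a:param}(ii); the subtracted penalty term is of order $\lambda_1/T$ and $\lambda_2/\sqrt T$, which by Assumption~\ref{a:lasso}(1)--(2) and the boundedness of the weights in Assumption~\ref{a:adalasso}(1) is $o(T)$ and $o(\sqrt T)$, hence negligible. A bounded noise term below a diverging threshold on each of finitely many coordinates gives $\Pr(\mathcal{A}_T^c)\to 0$.

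The event $\mathcal{B}_T$ is the crux. I would first establish that the right-hand threshold is positive and diverges coordinatewise: using the irrepresentable condition (Assumption~\ref{a:ic}) to bound $|\Omega_{21}\Omega_{11}^{-1}\sgn(\theta_0(1))|$ by $1-\eta$, together with the weight ordering of Assumption~\ref{a:adalasso}(2) and the lower bound of Assumption~\ref{a:adalasso}(1), the term $\Gamma^{-1/2}\lambda L(2)\bs{1}_{n-q}$ strictly dominates $\Gamma^{-1/2}\lambda|\Omega_{21}\Omega_{11}^{-1}L(1)\sgn(\theta_0(1))|$, leaving a threshold of order $\lambda_1/T$ on the $I(1)$ coordinates and $\lambda_2/\sqrt T$ on the $I(0)$ coordinates, both diverging by Assumption~\ref{a:lasso}(1)--(2). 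I would then control the maximum of the noise projections $\Gamma^{-1/2}W(2)'M(1)U$ over the $n-q$ coordinates by a union bound, bounding each coordinate by a Chebyshev/Markov inequality that uses the $d\ge 4$ moments and the mixing rates of Assumption~\ref{a:ip}(3) to control the $d$-th moment of the partial sums $W(2)_j'M(1)U$ (the residual-maker $M(1)$ contributing only lower-order corrections, themselves governed by $\Omega_{21}\Omega_{11}^{-1}$). Matching each per-coordinate bound against the diverging threshold and summing over the $n-q$ coordinates, the rate conditions of Assumption~\ref{a:lasso}(3) are exactly what forces $\Pr(\mathcal{B}_T^c)\to 0$.

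The main obstacle is this last step: obtaining a bound on $\max_j|\Gamma^{-1/2}W(2)'M(1)U|_j$ that is uniform over a dimension $n-q$ growing with $T$, for a heterogeneous array mixing $I(1)$ and $I(0)$ regressors, and using only four finite moments and a mixing rate rather than independence. The delicate points are (i) bounding the $d$-th moment of the integrated-times-stationary partial sums $\sum_t x_{jt}u_t$ uniformly in $j$ under the mixing assumption, and (ii) showing that $M(1)$ does not inflate these moments — equivalently, that the cross-term $W(2)'W(1)(W(1)'W(1))^{-1}W(1)'U$ stays controlled uniformly in $j$, which is where the irrepresentable condition re-enters. Once these moment bounds are in hand, the calibration in Assumption~\ref{a:lasso}(3) closes the union bound and the theorem follows.
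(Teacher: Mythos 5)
Your proposal follows essentially the same route as the paper's own proof: reduce to $\Pr(\mathcal{A}_T^c)+\Pr(\mathcal{B}_T^c)\to 0$ via Proposition~\ref{p:necessary}, handle $\mathcal{A}_T$ by fixed-dimensional asymptotics (bounded noise against a threshold diverging at rate $T$ resp.\ $\sqrt{T}$, with the penalty term negligible under Assumptions~\ref{a:lasso} and~\ref{a:adalasso}), and handle $\mathcal{B}_T$ by using the irrepresentable condition to isolate a threshold of order $\lambda_1/T$ resp.\ $\lambda_2/\sqrt{T}$, then a Chebyshev-plus-union bound whose rate $O(n_1T^2/\lambda_1^2)+O(n_2T/\lambda_2^2)$ is closed exactly by Assumption~\ref{a:lasso}(3). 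The moment-bound obstacle you flag is precisely the step the paper resolves by computing $\E|\zeta_j|^2$ from the fourth-moment condition in Assumption~\ref{a:ip}.
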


\begin{theorem}[Oracle Property]
	Suppose assumptions \ref{a:ip} to \ref{a:adalasso} are satisfied, and also that $(\lambda_2q_2)/T^{(1+\rho)/2}\rightarrow0$. Then the following holds
	\begin{equation}
		\left(
		\begin{array}{c}
			T(\hat\beta(1)-\beta_0(1))\\
			\sqrt{T}(\hat\gamma(1)-\gamma_0(1))
		\end{array}
		\right) \Rightarrow
		\left(
		\begin{array}{cc}
			\int B_{X(1)}B_{X(1)}' &  \bs{0}\\
			\bs{0}' & \Sigma_{zz}
		\end{array}
		\right)^{-1}\times
		\left(
		\begin{array}{c}
			\int_0^1B_{X(1)}dB_u\\
			N(\bs{0},\sigma^*_{u^2}\Sigma_{Z(1)^2}^*)
		\end{array}
		\right).
		\label{eq:oracle}
	\end{equation}
	\label{thm:oracle}
\end{theorem}

\section{Numerical Results}\label{algo-sim}
\subsection{Algorithm}
Since we are dealing with both $I(1)$ and $I(0)$ series, we cannot apply the plain vanilla LARS algorithm \citep{efronetal2004} to our problem, instead we will follow \citet{fanli2001} and \citet{hunterli2005} and apply a locally quadratic approximation (LQA) to the penalty function, more precisely the perturbed version in section 3.2 of \citet{hunterli2005}. This approach also allow us to derive a closed form formula for the standard error of the parameter estimates.

For a nonzero $\beta_j$ the perturbed LQA of the Adaptive Lasso penalty is given by
\begin{equation}
	\lambda_{1j}|\beta_j| \approx \lambda_{1j}|\beta_{0j}| + \frac{\lambda_{1j}}{2(|\beta_{0j}|+\varepsilon)}(\beta_j^2 - \beta_{0j}^2),
	\label{eq:lqa}
\end{equation}
for some small $\varepsilon>0$, and similarly for $\gamma_j$'s. Denote this approximation by $\psi_j(\beta_j)$; instead of minimizing \eqref{eq:adalasso}, we minimize 
\begin{equation}
	\|Y-X\beta-Z\gamma\|_2^2 +\lambda_1\sum_{j=1}^{n_1}\psi_j(\beta_j) +\lambda_2\sum_{j=1}^{n_2}\psi_j(\gamma_j)
	\label{eq:critlqa}
\end{equation}
iteratively until the estimates converge.

Define the diagonal matrix 
\[
E_k = \diag\left( \frac{\lambda_1\lambda_{11}}{(|\beta_1^{(k)}|+\varepsilon)},\dots,\frac{\lambda_1\lambda_{1n_1}}{(|\beta_{n_1}^{(k)}|+\varepsilon)},\frac{\lambda_2\lambda_{21}}{(|\gamma_1^{(k)}|+\varepsilon)},\dots,\frac{\lambda_2\lambda_{2n_2}}{(|\gamma_{n_2}^{(k)}|+\varepsilon)} \right).
\]
The estimator of $\theta^{(k+1)}$ is given by
\begin{equation}
	\theta^{(k+1)} = \left( W'W+E_k \right)^{-1}W'Y.
	\label{eq:lqaestimate}
\end{equation}

One issue with the adaptive Lasso is to find the weights $\lambda_{1j}$ and $\lambda_{2j}$. We propose to use an iterated adaptive Lasso, which consists in recalculating the weights $\lambda_{1j}$ and $\lambda_{2j}$ each step. More precisely, 
\begin{equation}
E_k = \diag\left( \frac{\lambda_1\lambda_{11}^{(k)}}{(|\beta_1^{(k)}|+\varepsilon)},\dots,\frac{\lambda_1\lambda_{1n_1}^{(k)}}{(|\beta_{n_1}^{(k)}|+\varepsilon)},\frac{\lambda_2\lambda_{21}^{(k)}}{(|\gamma_1^{(k)}|+\varepsilon)},\dots,\frac{\lambda_2\lambda_{2n_2}^{(k)}}{(|\gamma_{n_2}^{(k)}|+\varepsilon)} \right),
\label{eq:Ek}
\end{equation}
with
\begin{equation}
	\lambda_{1j}^{(k)} = |\beta_j^{(k-1)}|^{-\rho} \quad\mbox{ and }\quad \lambda_{2j}^{(k)} = |\gamma_j^{(k-1)}|^{-\rho}
	\label{eq:weightk}
\end{equation}
and the initial weights we calculate by using ridge regression with regularization parameter $\lambda^{(ridge)}$, i.e.
\begin{equation}
	\theta^{(0)} = (W'W+\lambda^{(ridge)}\bs{I}_n)^{-1}W'Y,
	\label{eq:ridge}
\end{equation}
for the best choice of $\lambda^{(ridge)}$.

This algorithm has shown to be stable in a number of simulations, with only a small change to ensure the numbers are within the margins of machine precision.

\subsection{Standard Error Formula}

\citet{hunterli2005} provide a sandwich formula for computing the covariance matrix of the penalized estimates of the nonzero components that has been proven to be consistent (\citet{fanpeng2004}). \citet{zou2006} adapted this formula to the adaptive Lasso case and is given by
\begin{equation}
	\widehat{\cov}(\hat{\theta}(1)) = \sigma^*_{uu}(W(1)'W(1)+E_k(1))^{-1}W(1)'W(1)(W(1)'W(1)+E_k(1))^{-1}.
	\label{eq:covest}
\end{equation}

If the parameter $\sigma^*_{uu}$ is unknown, one can replace it by its estimate from the full model. For the zero-valued variables, the standard errors are zero \citep{fanli2001}.

Although the consistency result derived by \citet{fanpeng2004} cannot be directly applied to our case, the same conclusion can be reached by adapting their proof to the integrated case.

\subsection{Choosing the regularization parameters}
To implement the algorithm described above, we need to estimate $\lambda_1$, $\lambda_2$ and $\lambda^{(ridge)}$. We will use the method called generalized cross-validation (GCV). 

Define the projection matrix of the ridge estimator \eqref{eq:ridge} as
\begin{equation}
	P_r({\theta(\lambda^{*})}) = W'(W'W+\lambda^{*}\bs{I}_n)^{-1}W'.
	\label{eq:proj-ridge}
\end{equation}
Hence, the number of effective parameters $\bs{e}(\lambda^*)=\trace(P_r({\theta(\lambda^*)}))$. Therefore, the GCV statistic for this problem is
\begin{equation}
	GCV_r(\lambda^{*}) = T^{-1}\frac{\|Y-W\theta(\lambda^*)\|_2^2}{(1-\bs{e}(\lambda^*)/T)^2},
	\label{eq:GCV-ridge}
\end{equation}
where $\theta(\lambda^*) = (W'W+\lambda^{^*}\bs{I}_n)^{-1}W'Y$. We find $\lambda^{(ridge)} = \arg\min_{\lambda^*}GCV_r(\lambda^*)$.

For the adaptive Lasso, define $\bs\lambda^* = (\lambda^*_1,\lambda^*_2)$ and
\begin{equation}
	E_{\bs\lambda^*} = \diag\left( \frac{\lambda^*_1\lambda_{11}}{(|\beta_1^{(0)}|+\varepsilon)},\dots,\frac{\lambda^*_1\lambda_{1n_1}}{(|\beta_{n_1}^{(0)}|+\varepsilon)},\frac{\lambda^*_2\lambda_{21}}{(|\gamma_1^{(0)}|+\varepsilon)},\dots,\frac{\lambda^*_2\lambda_{2n_2}}{(|\gamma_{n_2}^{(0)}|+\varepsilon)} \right).
	\label{eq:E-GCV}
\end{equation}
with
\begin{equation}
\lambda_{1j} = |\beta_j^{(0)}|^{-\rho} \quad\mbox{ and }\quad \lambda_{2j} = |\gamma_j^{(0)}|^{-\rho},
\label{eq:weight-GCV}
\end{equation}
where $\beta^{(0)}$ and $\gamma^{(0)}$ were estimated using \eqref{eq:ridge}. Define the projection matrix
\begin{equation}
	P_l({\theta(\bs\lambda^{*})}) = W'(W'W+E_{\lambda^*})^{-1}W'.  
	\label{eq:proj-lasso}
\end{equation}
The number of effective parameters $\bs{e}(\bs{\lambda}^*)$ is given by $\trace(P_l(\theta(\bs\lambda^*)))$, and the GCV statistic is
\begin{equation}
	GCV_l(\bs\lambda^{*}) = T^{-1}\frac{\|Y-W\theta(\bs\lambda^*)\|_2^2}{(1-\bs{e}(\bs\lambda^*)/T)^2},
\label{eq:GCV-lasso}
\end{equation}
where $\theta(\bs\lambda^*) = (W'W+E_{\bs\lambda^*})^{-1}W'Y$. We find $\bs\lambda = \arg\min_{\bs\lambda^*}GCV_l(\bs\lambda^*)$.

We perform both minimizations by doing a grid search before starting the adaptive Lasso estimation procedure. We can also include $\rho$ in the minimization of \eqref{eq:GCV-lasso}, but we found little impact between choosing $\rho$ dynamically and using it fixed at $0.9$. Smaller values for $\rho$ did affect the performance of the estimates.

\subsection{Simulation Studies}

In this section we report the results of the simulations studies. We want to evaluate the (i) model selection accuracy; (ii) estimation accuracy; and (iii) forecasting accuracy. We will consider four distinct model specifications. Each covariate is generate from a multivariate normal distribution with variance 1 and covariance structure defined in each model. We simulate each model $500$ times for three distinct sample sizes $T=50,\,100,\,200$ and an extra $50$ observations are used for evaluating prediction performance. 

\textbf{Model 1}: $u_t\sim N(0,1.5^2)$, $n_1=n_2=15$. Set $w_t = (v_t, z_t)$. The pairwise covariance between the $i$th and $j$th element of $w_t$ is given by $\cov(w_{it},w_{jt}) = r^{|i-j|}$, $r=0.5$, and $\var(w_j)=1$. The parameters $\gamma=\beta=(2.5,2.5,1.5,1.5,0.5,0.5,0,\dots,0)'$, meaning we have two large effects, two moderate effects and two weak effects for $X$ and $Z$.

\textbf{Model 2}: Similar to model 1, except that $r=0.9$.

\textbf{Model 3}: Similar to model 1, but the error term $u_t=0.6u_{t-1}+e_t$, with $e_t\sim N(0,1.5^2)$.

\textbf{Model 4}: Similar to model 3, but $n_1=n_2=50$

\textbf{Model 5}: Similar to model 1, but $n_1=n_2=50$, the first $15$ variables in $z_t$ and $u_t$ have the same dependence structure as in model 1, the remaining $2\times 35$ variables are independent.

\textbf{Model 6}: Similar to model 3, but $e_t\sim t_4$

In all examples we consider small, moderate and large effects for both $I(1)$ and $I(0)$ covariates. In model 1 we study a simple framework with a moderate number of candidate variables and weak to moderate correlation among them. In model 2 we consider the case in which the variables are highly correlated. Model 3 consider the case in which the errors have an AR(1) structure. Models 4 and 5 consider the case in which we have many variables with distinct correlations; and model 6  we consider AR(1) errors with fat tails.

\subsubsection{Model Selection Accuracy:} 
We evaluate model selection by calculating the number of corrected selected ``non-zero'' coefficients and the number of corrected selected ``zero'' coefficients. We use resampling to estimate the mean and standard deviation of the number of correct selected coefficients. In models 1, 2, 3 and 6, the number of ``zero'' coefficients is $18$; for models 4 and 5, the number of ``zero'' coefficients is $88$. For all models the number of ``non-zero'' coefficients is $12$.

\begin{table}[h]
	\caption{Variable Selection Performance}
\begin{center}
\begin{tabular}{lcccccc}
	\hline
	&\multicolumn{2}{c}{50}&\multicolumn{2}{c}{100}&\multicolumn{2}{c}{200}\\
Model &\#nz &\#z &\#nz &\#z &\#nz &\#z\\
\hline
1 & $\underset{(0.824)}{10.573}$ & $\underset{(1.367)}{16.308}$ & $\underset{(0.528)}{11.644}$ & $\underset{(1.177)}{16.860}$ & $\underset{(0.225)}{11.946}$ & $\underset{(0.837)}{17.262}$\\
2 & $\underset{(1.014)}{8.630}$ & $\underset{(1.453)}{16.605}$ & $\underset{(0.802)}{10.013}$ & $\underset{(1.034)}{17.008}$ & $\underset{(0.567)}{11.038}$ & $\underset{(0.859)}{17.320}$\\
3 & $\underset{(0.850)}{10.561}$ & $\underset{(1.485)}{15.749}$ & $\underset{(0.673)}{11.420}$ & $\underset{(1.392)}{15.661}$ & $\underset{(0.277)}{11.917}$ & $\underset{(1.449)}{15.611}$\\
4 & $\underset{(0.921)}{10.225}$ & $\underset{(3.270)}{79.029}$ & $\underset{(0.727)}{11.220}$ & $\underset{(5.567)}{77.689}$ & $\underset{(0.388)}{11.840}$ & $\underset{(3.536)}{79.076}$\\
5 & $\underset{(1.112)}{9.607}$ & $\underset{(3.794)}{79.557}$ & $\underset{(0.857)}{11.251}$ & $\underset{(8.175)}{78.925}$ & $\underset{(0.060)}{11.996}$ & $\underset{(1.888)}{85.454}$\\
6 & $\underset{(0.854)}{10.662}$ & $\underset{(1.498)}{15.809}$ & $\underset{(0.643)}{11.461}$ & $\underset{(1.401)}{15.820}$ & $\underset{(0.222)}{11.948}$ & $\underset{(1.421)}{15.889}$\\
\hline
\end{tabular}
\end{center}
\label{tbl:modsel}
\end{table}

We can see from table \ref{tbl:modsel} that the adaptive Lasso frequently selects the correct set of ``non-zero'' coefficients with small changes due to correlation, distinct errors specifications and number of candidate variables, these effects being more pronounced in small samples. The method performs well even in small to moderate samples. However, the sensibility of the model selection method for selecting the ``zero'' coefficients is affected by the number of candidate variables and error structure. We can see that the proportions of ``zero''-parameters correctly selected is smaller in the case we have many parameters and, particularly, when there is a AR(1) structure in the error term. Comparing models 4 and 5, we see that the combination of correlated errors and correlated variables has a large effect on the number of correctly selected ``zero''-coefficients in larger samples.

\subsubsection{Estimation Accuracy:}
We evaluate the estimation accuracy of the ``non-zero'' parameters and the standard deviation of the ``non-zero'' parameter estimates. For the estimation accuracy of the parameters, we compare the mean squared error (MSE) of the estimated parameters with the mean square error of the ``oracle-OLS'' parameters; and for the estimation accuracy of the parameter standard deviation we compare the estimate calculates by using \eqref{eq:covest} and the standard error calculated using resampling. We present the results for $(\beta_1,\beta_3,\beta_5,\gamma_1,\gamma_3,\gamma_5)$ for all six models.

Tables \ref{tbl:mse-mod1} -- \ref{tbl:mse-mod6} show the MSE of the parameters estimates. As expected the number of candidate variables, the covariance structure and the error structure affect the estimates. In small samples the standard error of the estimates are much larger than the oracle, however the mean square error quickly converges to the oracle MSE, as expected from theorem \ref{thm:oracle}. The worst performance was model 4 that showed an MSE  of the $\beta$ estimates almost three time as big as the oracle in moderate-to-large samples (200 observations), however the decrease in the MSE is very steep, indicating that this difference vanishes in larger samples. In fact, this error is really small in larger samples, being negligible when we have 1000 observations.

\begin{table}
	\caption{MSE: Model 1}
	\begin{center}
	\begin{tabular}{lcccccc}
	\hline
	&\multicolumn{2}{c}{50}&\multicolumn{2}{c}{100}&\multicolumn{2}{c}{200}\\
	Parameter &AdaLasso &Oracle-OLS &AdaLasso &Oracle-OLS &AdaLasso &Oracle-OLS\\
	\hline
	$\beta_1$ & $0.117$ & $0.045$ & $0.018$ & $0.010$ & $0.004$ & $0.002$\\
	$\beta_3$ & $0.129$ & $0.062$ & $0.022$ & $0.012$ & $0.004$ & $0.003$\\
	$\beta_5$ & $0.131$ & $0.051$ & $0.022$ & $0.013$ & $0.003$ & $0.003$\\
	$\gamma_1$ & $0.148$ & $0.087$ & $0.052$ & $0.033$ & $0.024$ & $0.017$\\
	$\gamma_3$ & $0.158$ & $0.102$ & $0.055$ & $0.045$ & $0.023$ & $0.019$\\
	$\gamma_5$ & $0.154$ & $0.113$ & $0.065$ & $0.042$ & $0.025$ & $0.021$\\
	\hline
	\end{tabular}
	\end{center}
\label{tbl:mse-mod1}
\end{table}

\begin{table}
	\caption{MSE: Model 2}
	\begin{center}
	\begin{tabular}{lcccccc}
	\hline
	&\multicolumn{2}{c}{50}&\multicolumn{2}{c}{100}&\multicolumn{2}{c}{200}\\
	Parameter &AdaLasso &Oracle-OLS &AdaLasso &Oracle-OLS &AdaLasso &Oracle-OLS\\
	\hline
$\beta_1$ & $0.889$ & $0.171$ & $0.111$ & $0.035$ & $0.022$ & $0.009$\\
$\beta_3$ & $0.834$ & $0.323$ & $0.138$ & $0.065$ & $0.025$ & $0.017$\\
$\beta_5$ & $0.329$ & $0.309$ & $0.138$ & $0.074$ & $0.028$ & $0.015$\\
$\gamma_1$ & $1.152$ & $0.392$ & $0.365$ & $0.146$ & $0.122$ & $0.061$\\
$\gamma_3$ & $1.002$ & $0.549$ & $0.454$ & $0.257$ & $0.167$ & $0.125$\\
$\gamma_5$ & $0.373$ & $0.660$ & $0.232$ & $0.249$ & $0.154$ & $0.114$\\
\hline
	\end{tabular}
	\end{center}
	\label{tbl:mse-mod2}
\end{table}

\begin{table}
	\caption{MSE: Model 3}
	\begin{center}
	\begin{tabular}{lcccccc}
	\hline
	&\multicolumn{2}{c}{50}&\multicolumn{2}{c}{100}&\multicolumn{2}{c}{200}\\
	Parameter &AdaLasso &Oracle-OLS &AdaLasso &Oracle-OLS &AdaLasso &Oracle-OLS\\
	\hline
$\beta_1$ & $0.192$ & $0.105$ & $0.058$ & $0.036$ & $0.020$ & $0.011$\\
$\beta_3$ & $0.196$ & $0.129$ & $0.067$ & $0.044$ & $0.021$ & $0.014$\\
$\beta_5$ & $0.174$ & $0.128$ & $0.077$ & $0.050$ & $0.020$ & $0.013$\\
$\gamma_1$ & $0.142$ & $0.100$ & $0.059$ & $0.044$ & $0.028$ & $0.023$\\
$\gamma_3$ & $0.155$ & $0.117$ & $0.061$ & $0.054$ & $0.029$ & $0.028$\\
$\gamma_5$ & $0.138$ & $0.115$ & $0.079$ & $0.056$ & $0.035$ & $0.029$\\
\hline
	\end{tabular}
	\end{center}
	\label{tbl:mse-mod3}
\end{table}

\begin{table}
	\caption{MSE: Model 4}
	\begin{center}
	\begin{tabular}{lcccccc}
	\hline
	&\multicolumn{2}{c}{50}&\multicolumn{2}{c}{100}&\multicolumn{2}{c}{200}\\
	Parameter &AdaLasso &Oracle-OLS &AdaLasso &Oracle-OLS &AdaLasso &Oracle-OLS\\
	\hline
$\beta_1$ & $0.334$ & $0.098$ & $0.087$ & $0.033$ & $0.029$ & $0.012$\\
$\beta_3$ & $0.282$ & $0.116$ & $0.100$ & $0.046$ & $0.028$ & $0.013$\\
$\beta_5$ & $0.190$ & $0.117$ & $0.096$ & $0.045$ & $0.032$ & $0.012$\\
$\gamma_1$ & $0.247$ & $0.104$ & $0.070$ & $0.039$ & $0.026$ & $0.022$\\
$\gamma_3$ & $0.216$ & $0.131$ & $0.080$ & $0.059$ & $0.029$ & $0.028$\\
$\gamma_5$ & $0.175$ & $0.116$ & $0.105$ & $0.053$ & $0.029$ & $0.030$\\
\hline
	\end{tabular}
	\end{center}
	\label{tbl:mse-mod4}
\end{table}

\begin{table}
	\caption{MSE: Model 5}
	\begin{center}
	\begin{tabular}{lcccccc}
	\hline
	&\multicolumn{2}{c}{50}&\multicolumn{2}{c}{100}&\multicolumn{2}{c}{200}\\
	Parameter &AdaLasso &Oracle-OLS &AdaLasso &Oracle-OLS &AdaLasso &Oracle-OLS\\
	\hline
$\beta_1$ & $0.404$ & $0.043$ & $0.070$ & $0.010$ & $0.005$ & $0.002$\\
$\beta_3$ & $0.341$ & $0.049$ & $0.082$ & $0.012$ & $0.004$ & $0.003$\\
$\beta_5$ & $0.208$ & $0.057$ & $0.103$ & $0.013$ & $0.004$ & $0.003$\\
$\gamma_1$ & $0.392$ & $0.060$ & $0.055$ & $0.029$ & $0.013$ & $0.012$\\
$\gamma_3$ & $0.385$ & $0.064$ & $0.059$ & $0.031$ & $0.012$ & $0.012$\\
$\gamma_5$ & $0.191$ & $0.061$ & $0.075$ & $0.026$ & $0.015$ & $0.012$\\
\hline
	\end{tabular}
	\end{center}
	\label{tbl:mse-mod5}
\end{table}

\begin{table}
	\caption{MSE: Model 6}
	\begin{center}
	\begin{tabular}{lcccccc}
	\hline
	&\multicolumn{2}{c}{50}&\multicolumn{2}{c}{100}&\multicolumn{2}{c}{200}\\
	Parameter &AdaLasso &Oracle-OLS &AdaLasso &Oracle-OLS &AdaLasso &Oracle-OLS\\
	\hline
	$\beta_1$ & $0.173$ & $0.099$ & $0.054$ & $0.032$ & $0.021$ & $0.010$\\
	$\beta_3$ & $0.169$ & $0.114$ & $0.057$ & $0.041$ & $0.018$ & $0.012$\\
	$\beta_5$ & $0.165$ & $0.117$ & $0.066$ & $0.040$ & $0.015$ & $0.012$\\
	$\gamma_1$ & $0.133$ & $0.089$ & $0.046$ & $0.040$ & $0.021$ & $0.019$\\
	$\gamma_3$ & $0.147$ & $0.119$ & $0.056$ & $0.053$ & $0.024$ & $0.027$\\
	$\gamma_5$ & $0.140$ & $0.115$ & $0.072$ & $0.049$ & $0.031$ & $0.021$\\
	\hline
	\end{tabular}
	\end{center}
	\label{tbl:mse-mod6}
\end{table}

Tables \ref{tbl:sd-mod1} -- \ref{tbl:sd-mod6} compare the estimated standard deviation (SD) of the parameter with the actual standard deviation of the parameter calculated using resampling. We estimate $\sigma_{uu}$ and $\sigma_{uu}^*$ assuming knowledge of the data generating process of the error term, which is a reasonable assumption since we are only interested in verifying the behavior of the proposed formula in finite samples. If the data generating process is unknown, we can estimate the autoregressive order using the same method proposed here.

We can see that, for all model specifications, the difference between the estimated standard deviations calculated using resampling and equation \eqref{eq:covest} shrink as the sample size increases for both $\beta$ and $\gamma$. The worst performance was model 2, where the variables are highly correlated. In larger samples the estimated standard deviation is reasonably close to the ``true'' one estimated by using resampling.

\begin{table}
	\caption{Model 1: Standard Deviation and Estimated Standard Deviation}
	\label{tbl:sd-mod1}
	\begin{center}
\begin{tabular}{lcccccc}
\hline
&\multicolumn{2}{c}{50}&\multicolumn{2}{c}{100}&\multicolumn{2}{c}{200}\\
Parameter&$\sigma$ &$\hat{\sigma}$ &$\sigma$ &$\hat{\sigma}$ &$\sigma$ &$\hat{\sigma}$\\
\hline
$\beta_1$ & $0.287$ & $0.165$ & $0.128$ & $0.080$ & $0.053$ & $0.040$\\
$\beta_3$ & $0.333$ & $0.181$ & $0.132$ & $0.090$ & $0.063$ & $0.046$\\
$\beta_5$ & $0.374$ & $0.109$ & $0.157$ & $0.083$ & $0.064$ & $0.045$\\
$\gamma_1$ & $0.356$ & $0.276$ & $0.194$ & $0.172$ & $0.127$ & $0.121$\\
$\gamma_3$ & $0.406$ & $0.296$ & $0.222$ & $0.190$ & $0.147$ & $0.135$\\
$\gamma_5$ & $0.404$ & $0.169$ & $0.273$ & $0.143$ & $0.168$ & $0.121$\\
\hline
\end{tabular}

	\end{center}
\end{table}

\begin{table}
\caption{Model 2: Standard Deviation and Estimated Standard Deviation}
\label{tbl:sd-mod2}
\begin{center}

\begin{tabular}{lcccccc}
\hline
&\multicolumn{2}{c}{50}&\multicolumn{2}{c}{100}&\multicolumn{2}{c}{200}\\
Parameter&$\sigma$ &$\hat{\sigma}$ &$\sigma$ &$\hat{\sigma}$ &$\sigma$ &$\hat{\sigma}$\\
\hline
$\beta_1$ & $0.576$ & $0.345$ & $0.270$ & $0.179$ & $0.111$ & $0.084$\\
$\beta_3$ & $0.919$ & $0.358$ & $0.372$ & $0.222$ & $0.152$ & $0.110$\\
$\beta_5$ & $0.637$ & $0.163$ & $0.434$ & $0.111$ & $0.207$ & $0.089$\\
$\gamma_1$ & $0.682$ & $0.623$ & $0.404$ & $0.396$ & $0.253$ & $0.254$\\
$\gamma_3$ & $1.048$ & $0.563$ & $0.650$ & $0.450$ & $0.368$ & $0.324$\\
$\gamma_5$ & $0.739$ & $0.210$ & $0.586$ & $0.168$ & $0.451$ & $0.130$\\
\hline
\end{tabular}
	\end{center}
\end{table}

\begin{table}
\caption{Model 3: Standard Deviation and Estimated Standard Deviation}
\label{tbl:sd-mod3}
\begin{center}

\begin{tabular}{lcccccc}
\hline
&\multicolumn{2}{c}{50}&\multicolumn{2}{c}{100}&\multicolumn{2}{c}{200}\\
Parameter&$\sigma$ &$\hat{\sigma}$ &$\sigma$ &$\hat{\sigma}$ &$\sigma$ &$\hat{\sigma}$\\
\hline
$\beta_1$ & $0.399$ & $0.389$ & $0.226$ & $0.206$ & $0.124$ & $0.107$\\
$\beta_3$ & $0.450$ & $0.417$ & $0.251$ & $0.228$ & $0.136$ & $0.114$\\
$\beta_5$ & $0.436$ & $0.253$ & $0.281$ & $0.189$ & $0.145$ & $0.115$\\
$\gamma_1$ & $0.380$ & $0.341$ & $0.231$ & $0.240$ & $0.153$ & $0.166$\\
$\gamma_3$ & $0.370$ & $0.371$ & $0.232$ & $0.263$ & $0.172$ & $0.184$\\
$\gamma_5$ & $0.406$ & $0.213$ & $0.291$ & $0.191$ & $0.186$ & $0.166$\\
\hline
\end{tabular}
	\end{center}
\end{table}

\begin{table}
\caption{Model 4: Standard Deviation and Estimated Standard Deviation}
\label{tbl:sd-mod4}
\begin{center}

\begin{tabular}{lcccccc}
\hline
&\multicolumn{2}{c}{50}&\multicolumn{2}{c}{100}&\multicolumn{2}{c}{200}\\
Parameter&$\sigma$ &$\hat{\sigma}$ &$\sigma$ &$\hat{\sigma}$ &$\sigma$ &$\hat{\sigma}$\\
\hline
$\beta_1$ & $0.479$ & $0.615$ & $0.301$ & $0.367$ & $0.155$ & $0.152$\\
$\beta_3$ & $0.512$ & $0.673$ & $0.332$ & $0.405$ & $0.171$ & $0.170$\\
$\beta_5$ & $0.470$ & $0.494$ & $0.329$ & $0.338$ & $0.185$ & $0.166$\\
$\gamma_1$ & $0.380$ & $0.555$ & $0.270$ & $0.428$ & $0.148$ & $0.246$\\
$\gamma_3$ & $0.429$ & $0.606$ & $0.284$ & $0.477$ & $0.166$ & $0.274$\\
$\gamma_5$ & $0.427$ & $0.328$ & $0.315$ & $0.348$ & $0.181$ & $0.245$\\
\hline
\end{tabular}
	\end{center}
\end{table}

\begin{table}
\caption{Model 5: Standard Deviation and Estimated Standard Deviation}
\label{tbl:sd-mod5}
\begin{center}

\begin{tabular}{lcccccc}
\hline
&\multicolumn{2}{c}{50}&\multicolumn{2}{c}{100}&\multicolumn{2}{c}{200}\\
Parameter&$\sigma$ &$\hat{\sigma}$ &$\sigma$ &$\hat{\sigma}$ &$\sigma$ &$\hat{\sigma}$\\
\hline
$\beta_1$ & $0.543$ & $0.301$ & $0.226$ & $0.125$ & $0.064$ & $0.040$\\
$\beta_3$ & $0.562$ & $0.329$ & $0.261$ & $0.140$ & $0.070$ & $0.046$\\
$\beta_5$ & $0.491$ & $0.218$ & $0.307$ & $0.113$ & $0.082$ & $0.046$\\
$\gamma_1$ & $0.452$ & $0.445$ & $0.233$ & $0.239$ & $0.113$ & $0.106$\\
$\gamma_3$ & $0.511$ & $0.395$ & $0.236$ & $0.238$ & $0.121$ & $0.106$\\
$\gamma_5$ & $0.359$ & $0.163$ & $0.260$ & $0.190$ & $0.131$ & $0.100$\\
\hline
\end{tabular}
	\end{center}
\end{table}

\begin{table}
\caption{Model 6: Standard Deviation and Estimated Standard Deviation}
\label{tbl:sd-mod6}
\begin{center}

\begin{tabular}{lcccccc}
\hline
&\multicolumn{2}{c}{50}&\multicolumn{2}{c}{100}&\multicolumn{2}{c}{200}\\
Parameter&$\sigma$ &$\hat{\sigma}$ &$\sigma$ &$\hat{\sigma}$ &$\sigma$ &$\hat{\sigma}$\\
\hline
$\beta_1$ & $0.356$ & $0.339$ & $0.217$ & $0.196$ & $0.117$ & $0.100$\\
$\beta_3$ & $0.425$ & $0.386$ & $0.243$ & $0.215$ & $0.128$ & $0.110$\\
$\beta_5$ & $0.420$ & $0.248$ & $0.277$ & $0.175$ & $0.140$ & $0.111$\\
$\gamma_1$ & $0.334$ & $0.307$ & $0.206$ & $0.225$ & $0.148$ & $0.159$\\
$\gamma_3$ & $0.370$ & $0.332$ & $0.230$ & $0.247$ & $0.163$ & $0.176$\\
$\gamma_5$ & $0.406$ & $0.185$ & $0.280$ & $0.181$ & $0.170$ & $0.160$\\
\hline
\end{tabular}
	\end{center}
\end{table}

\subsubsection{Prediction Accuracy:}
We evaluate the prediction accuracy by calculating prediction mean square error\footnote{$PMSE = K^{-1}\sum_{t=T+1}^{T+K}(y_t-\hat{y}_t)^2$, where $\hat{y}_t$ is the predicted value of $y_t$ using the estimated parameters.} (PMSE) for each model and dividing by the ``oracle-OLS'' PMSE, i.e. the PMSE of the OLS estimator conditional on knowing the variables that enter in the model. This measure tells us how close we are from the traditional OLS predictor, a number close to 1 means that the prediction accuracy is very close to the oracle prediction. To avoid the effect of large values, we used the average median of the PMSEs, estimated using resampling. Table \ref{tbl:pmse} summarizes the results.

\begin{table}[h]
	\caption{Predicton Mean Squared Error}
	\begin{center}
		\begin{tabular}{lccc}
			\hline
			Model &50 &100 &200\\
			\hline
			1 & $1.640$ & $1.101$ & $1.022$\\
			2 & $1.516$ & $1.174$ & $1.075$\\
			3 & $1.559$ & $1.418$ & $1.329$\\
			4 & $4.524$ & $4.887$ & $4.362$\\
			5 & $7.297$ & $4.188$ & $1.120$\\
			6 & $1.442$ & $1.721$ & $1.343$\\
			\hline
		\end{tabular}
	\end{center}
	\label{tbl:pmse}
\end{table}

We can see that the PMSE approaches the oracle PMSE as the sample size increases. The rate in which the prediction error decreases depends on the number of candidate variables and the error structure, for instance, in models 4 and 5 the PMSE can be as much as 7 times larger than the oracle in small samples, but this error rapidly converges to the oracle in the case where the errors are i.i.d. and the candidate variables uncorrelated with the variables in the model. 

In model 4 the relative PMSE is very large and decreases slowly. This behavior can be explained by observing the performance of the method in choosing the ``zero'' parameters in this model. We can see that although the model selects the correct set of ``non-zero'' parameters correctly, a number of ``zero'' parameters is also selected and, since we are dealing with ``explosive'' regressors, the model prediction variance also increases. However, as the sample size increases the relative error also decreases as expected, for instance for sample sizes $500$ and $1000$, the relative PMSE are respectively $3.837$ and $3.013$.

\section{Conclusion}\label{conclusion}

In this paper, we provide an extension of the Adaptive Lasso variable selection method to cointegrated regressions. We show that, under some regularity conditions frequently assumed in the model selection literature and cointegration literature, the method selects the correct subset of variables and converges to the ``oracle'' estimate, i.e. the estimator under the assumption we know the variables that enter in the model.

Although the result only allows for a sub-linear number of $I(1)$ candidate variables and a polynomial number of canditate $I(0)$ variables. We allow the number of $I(0)$ variables that enter in the model to increase with the sample size $T$. Such condition allow for Dynamic OLS Estimation if we consider the integrated variables to be endogenous.  Another interesting extension is the multivariate case. We can see that all results hold for the vector case if the dimension of $y_t$ is fixed, i.e., a fixed number of regressions. It can be shown by just adapting the proof of the theorems and conditions to the vector case. 

All the previous result hold if all parameters $\beta = 0$ or $\gamma = 0$, meaning that we do not need $I(1)$ or $I(0)$ variables for the results to hold. Also, the inclusion of the intercept does not change our results.

\bibliographystyle{abbrvnat}
\bibliography{cointegration,variableselection}

\appendix
\section{Proof of theorems \ref{thm:modsel} and \ref{thm:oracle}}
Before presenting the proof of Theorems \ref{thm:modsel} and \ref{thm:oracle} we introduce an useful lemma.

\begin{lemma}
	Let 
		\begin{equation}
			\Omega_\infty = \left(
			\begin{array}{cc}
				\Omega_{X,\infty}&\bs{0}\\
				\bs{0}'&\Omega_{Z,\infty}
			\end{array}
			\right)
			\label{eq:omega8}
		\end{equation}
		where
		\[
			\Omega_{X,\infty} = \int_0^1B_{X(1)}(r)B_{X(1)}'(r)dr\quad\mbox{ and }\quad\Omega_{Z,\infty} = \Sigma_{Z(1)^2},
		\]
		where for any $0\le r\le1$, $B_{X(1)}(r) = \lim_{T\rightarrow\infty}T^{-1/2}\sum_{t=1}^{\lceil rT \rceil}\bs{v}_t(1)$.
		Similarly, split the matrix $\Omega_{11}$ into
		\begin{equation}
			\Omega_{11} = \left(
			\begin{array}{cc}
				\Omega_{X(1)^2}&\Omega_{Z(1)X(1)}\\
				\Omega_{Z(1)X(1)}'&\Omega_{Z(1)^2}
			\end{array}
			\right) = \left(
			\begin{array}{cc}
				T^{-2}X(1)'X(1)   & T^{-3/2}Z(1)'X(1)\\
				T^{-3/2}X(1)'Z(1) & T^{-1}Z(1)'Z(1)
			\end{array}
			\right).
			\label{eq:omega11}
		\end{equation}
		Let $\delta=(\delta_1',\delta_2')'$ and $\xi = (\xi_1',\xi_2')'$ denote a couple of $(q_1+q_2)\times 1$ vectors satisfying $\delta_i'\delta_i\le q_i$ and $\xi_i'\xi_i\le q_i$ for $i=1,2$. Then under Assumption \ref{a:ip} and if $q_1=O(1)$ and $q_2=o(T^{1/2})$, we have
	\begin{itemize}
		\item [(a)] $\delta'(\Omega_{11}-\Omega_\infty)\xi = o_p(1) $;
		\item [(b)] $\delta_1'(\Omega_{X(1)^2}-\Omega_{X,\infty})\xi_1 = o_p(1)$;
		\item [(c)] $\delta_2'(\Omega_{Z(1)^2}-\Omega_{Z,\infty})\xi_2 = o_p(1)$; and
		\item [(d)] $\delta_2'\Omega_{Z(1)X(1)}\xi_1 =o_p(1)$.
	\end{itemize}
		
	\label{l:omega}
\end{lemma}

\begin{proof}
Let's first consider the off-diagonal elements $\Omega_{X(1)Z(1)}=T^{-3/2}X(1)'Z(1)$. We have 
\begin{align*}
	\sup_{\|\delta_1\|^2\le q_1 \, ,\|\xi_2\|^2\le q_2}\delta_1'(T^{-3/2}X(1)'Z(1))\xi_2 &= T^{-1/2}\sup_{\|\delta_1\|^2\le q_1 \, ,\|\xi_2\|^2\le q_2}\sum_{i=1}^{q_1}\sum_{j=1}^{q_2}\delta_{1i}\xi_{2j}(T^{-1}\sum_{t=1}^{T}x_{it}z_{it})\\
	&\le \frac{q_1q_2}{T^{1/2}}O_p(1)\\
	&= \frac{q_2}{T^{1/2}}O(1)O_p(1)\\
	&= o_p(1)
\end{align*}
because $q_2/T^{1/2}=o(1)$.

It from classical results in cointegration theory that the element $|\Omega_{X(1)^2}-\Omega_{X,\infty}|=o_p(1)$ since $q_1=O(1)$. Finally, we have to show that $\delta_1'(\Omega_{Z(1)^2}-\Omega_{Z,\infty})\xi_1 = o_p(1)$. Note that $G_T = \sqrt{T}\delta_2'(\Omega_{Z(1)^2}-\Omega_{Z,\infty})\xi_2$ is a centered empirical process and that for any $\varepsilon>0$,
\begin{align*}
	\Pr\left( \delta_2'(\Omega_{Z(1)^2}-\Omega_{z,\infty})\xi_2 > \varepsilon \right) &= \Pr\left( G_T \ge \sqrt{T}\varepsilon \right)\\
	&\le\E(G_T)^2/T\varepsilon^2\\
	&\le \frac{q_2^2\max_{1\le i\le j\le q_2} \E(T^{-1/2}\sum z_{it}z_{jt}-\sigma_{ij})^2}{\varepsilon T}\\
	&=\frac{q_2^2}{\varepsilon T} O(1)\\
	&\rightarrow 0.
\end{align*}

Finally, combining these three results we have $\delta'(\Omega_{11}-\Omega_\infty)\xi = o_p(1)$, proving the lemma.
\end{proof}

\begin{proof}[Proof of theorem \ref{thm:modsel}]
We knoe from proposition \ref{p:necessary} that showing sign consistency is equivalent to showing that $\Pr(\mathcal{A}_T\cap\mathcal{B}_T)\rightarrow 1$.  It is sufficient to show that $1-\Pr(\mathcal{A}^c_T) - \Pr(\mathcal{B}^c_T) \rightarrow 1$, the superscript ``$c$'' meaning complement. 

The proof is divided in two parts. In the first one we show that $\Pr(\mathcal{A}_T^c)\rightarrow 0$ and in the second part we show that $\Pr(\mathcal{B}_T^c)\rightarrow 0$.

Note the event $\mathcal{A}_T^c$ is given by
\[
\mathcal{A}_T = \left\{ \Gamma^{-1/2}|\Omega_{11}^{-1}W(1)'U| < \Gamma^{1/2}|\theta_0(1)|-\frac{1}{2}\Gamma^{-1/2}\lambda|\Omega_{11}^{-1}L(1)\sgn(\theta_0(1))| \right\}\]
where the inequality holds elementwise. Hence, the complement is an union and can be split into $\mathcal{A}_T^c(X)\cup\mathcal{A}_T^c(Z)$, with the events $\mathcal{A}_T^c(X)$ and $\mathcal{A}_T^c(Z)$ given by
\begin{multline*}
\mathcal{A}_T^c(X) = \left\{ T^{-1}|[\Omega_{X(1)^2}^{-1}+o_p(1)]X(1)'U| > T|\beta_0(1)|\right.\\
\left.-\frac{1}{2}T^{-1}\lambda_1|[\Omega_{X(1)^2}^{-1}+o_p(1)]L_X(1)\sgn(\beta_0(1))| \right\}
\end{multline*}
and 
\begin{multline*}
\mathcal{A}_T^c(Z) = \left\{ T^{-1/2}|[\Omega_{Z(1)^2}^{-1}+o_p(1)]Z(1)'U| > T^{1/2}|\gamma_0(1)|\right.\\
\left.-\frac{1}{2}T^{-1/2}\lambda_2|[\Omega_{Z(1)^2}^{-1}+o_p(1)]L_Z(1)\sgn(\gamma_0(1))| \right\}
\end{multline*}

We first deal with $\mathcal{A}_T^c(X)$. By Assumptions \ref{a:adalasso}, \ref{a:lasso} and \ref{a:ip}, and by using Lemma \ref{l:omega} we jave
\begin{align*}
	T^{-1}\lambda_1\lambda_{1j}|[(\Omega_{X(1)^2}^{-1}+o_p(1))\sgn(\beta_{0}(1))| ]_j&=\frac{\lambda_1}{T^{1+\rho}}V_{1j}[|(\Omega^{-1}_{X,\infty})\sgn(\beta_0(1))|]_j|+o_p(1),\\
& = o_p(1),
\end{align*}
where the first line follows from lemma \ref{l:omega} and Assumption \ref{a:adalasso} ($T^{\rho}\lambda_{1j}=V_{1j}+o_p(1)$) and the last line follows from $\lambda_1=o(T^{1+\rho})$ and the fact that $|(\Omega^{-1}_{X,\infty})\sgn(\beta_0(1))|_j = q_1[O_p(1)+o_p(1)] = O_p(1)$.

Hence,
\begin{align*}
	\Pr\left(\mathcal{A}_T^c(X)\right) &= \Pr\left(\left\{ \left[|T^{-1}\Omega^{-1}_{X,\infty}X(1)'U\right]_j > T|\beta_{0j}| \right\},\,j=1,\dots,q_1\right)+o_p(1)\\
	&\le \sum_{j=1}^{q_1}\Pr\left(\left[|T^{-1}\Omega^{-1}_{X,\infty}X(1)'U\right]_j > T|\beta_{0j}| \right)+o_p(1)\\
	&\le \frac{q_1}{T^2\beta_*^2}\max_{1\le j\le q_1}\E\left( \left[ T^{-1}|\Omega^{-1}_{X,\infty}X(1)'U|\right]^2_j \right) +o_p(1)\\
	&\rightarrow 0,
\end{align*}
where the second line follows from the union bound, third line from the Chebyschev's inequality and the last line by Assumption \ref{a:ip} and because $q_1$ is constant.

Now we focus our attention on $\Pr(\mathcal{A}_T^c(Z))$. First denote by $\mathcal{D}_T$ the event $\{\|\delta\|^2=q_2:\,\delta'| (T^{-1}Z(1)Z(1))^{-1} - \Omega_{Z,\infty}^{-1})|\delta>\varepsilon\tau_*^{-1}\}$, for $\varepsilon+1<c_\varepsilon|\gamma^*|$ and $c_\varepsilon$ some positive constant. We have alreadu shown that $\Pr(\mathcal{D}_T)\rightarrow 0$ as $T\rightarrow \infty$. Consider the spectral decomposition of $\Omega_{Z,\infty} = EDE'$ with $E$ a matrix of $q_2$ eigenvectors and $D$ a diagonal matrix of eigenvalues. By assumption the elements of $D$ are greater than $\tau_*$, then inside $\mathcal{D}_T^c$ and for all $j=1,\dots,q_2$,
\begin{align*}
	T^{-1/2}\lambda_2\lambda_{2j}[|(\Omega^{-1}_{Z,\infty} + \varepsilon/\tau_*) \sgn(\gamma_0(1)|]_j &=  T^{-1/2}\lambda_2\lambda_{2j}[|ED^{-1}E'\sgn(\gamma_0(1))|]_j+ T^{-1/2}\lambda_2\lambda_{2j}q_2\varepsilon/\tau_*\\
	&\le T^{-1/2}q_2\lambda_2\lambda_{2j}/\tau_*+ T^{-1/2}\lambda_2\lambda_{2j}q_2\varepsilon/\tau_*\\
	&= \left(1+\varepsilon\right)\frac{\lambda_2 q_2}{\tau_*T^{(1+\rho)/2}}V_{2j}(1+o_p(1))\\
	& \le c_\varepsilon\gamma^*\frac{\lambda_2 q_2}{\tau_*T^{(1+\rho)/2}}V_{2j}(1+o_p(1)),
\end{align*}
where the second line follows from
\begin{align*}
	[|\Omega_{z,\infty}^{-1}\sgn(\gamma_0(1))|]^2_j &\le \sup_{\|\delta\|=1}\left( |\delta'[\Omega_{z,\infty}^{-1}\sgn(\gamma_0(1)]| \right)^2\\
	&\le \sup_{\|\delta\|=1}\|\delta\|^2\|\Omega_{z,\infty}^{-1}\sgn(\gamma_0(1))\|^2\\
	&=\sgn(\gamma_0(1))'ED^{-2}E'\sgn(\gamma_0(1))\\
	&\le \|\sgn(\gamma_0(1))\|^2\|E\|^2\tau_*^{-2}\\
	&\le q_1^2\tau_*^{-2}
\end{align*}
and the third line from the assumption that $T^{\rho/2}\lambda_{2j}$ converges to a stationary process.

Then,  
\begin{align*}
	\Pr \left( \mathcal{A}^c_T(Z) \cap \mathcal{D}_T^c\right) &\le \Pr\left( \max_{1\le j\le q_2}[|T^{-1/2}\Omega_{Z,\infty}^{-1}Z(1)'U|]_j>T^{1/2}|\gamma_*|-c_\varepsilon\gamma^*q_2\lambda_2T^{-(1/2+\rho)}V_2\tau_*^{-1} \right)\\
	&\le \frac{\gamma_*^{2}}{T}\frac{\E\left[\max_{1\le j\le q_2}[|T^{-1/2}\Omega_{Z,\infty}^{-1}Z(1)'U|]_j^2\right]}{(1-c_\varepsilon\lambda_2q_2V_2/\tau_*T^{1+\rho/2})^2}\\
	&\le \frac{\gamma_*^2}{T}\frac{q_2^{2+1/d}\tau_*^{-2}\max_j\left(\E|\sum_{t=1}^Tz_{jt}u_t|^{2d}\right)^{1/d}  }{\left[ 1-\left( \frac{c_\varepsilon}{\tau_*}\frac{\lambda_2}{T^{(1+\rho)/2}}\frac{q_2}{T^{1/2}}V_2 \right) \right]^2} \\
	&\rightarrow 0,
\end{align*}
where the second line from the Chebyschev's inequality. The third line follows from the bound
\begin{align*}
	\left(\max_j[T^{-1/2}\Omega^{-1}_{Z,\infty}Z(1)'U]_j\right)^2 &= \max_j\left( [T^{-1/2}ED^{-1}E'Z(1)'U]_j \right)^2\\
	&\le \tau_*^{-2}q_2^2(\max_j[T^{-1/2}Z(1)'U]_j)^2,\\
\end{align*}
 and by the Jensen's inequality, $\E(\max_j|T^{-1/2}\sum_{t=1}^{T}z_{jt}u_t|^2) \le q_2^{1/d}\max_j\left( \E|T^{-1/2}\sum_{t=1}^Tz_{jt}u_t|^{2d} \right)^{1/d}$. 
The conclusion follows from assumptions \ref{a:ip}, \ref{a:lasso} and \ref{a:adalasso}.

% B_T
Moving to $\mathcal{B}_T^c$, it follows from Lemma \ref{l:omega} that $M(1) = M_\infty(1) + o_p(1)$, and the matrix $M_\infty(1) =\diag(M_{X}(1),M_{Z}(1))$, with
\[
M_{X}(1) = \bs{I}_T-X(1)(X(1)'X(1))^{-1}X(1)' \quad \mbox{ and }\quad M_{Z}(1) = \bs{I}_T- Z(1)(Z(1)'Z(1))^{-1}Z(1)'.
\]

The events $\mathcal{B}_T^c(X)$ and $\mathcal{B}_T^c(Z)$ can be written as
\begin{multline*}
\mathcal{B}_T^c(X) =  \left\{ \max_{q_1<j\le n_1}|2T^{-1}\bs{x}_j'[M_X(1)+o_p(1)]U| \right.\\
\left.> T^{-1}\lambda_1\lambda_{1j}-\lambda_1|T^{-1}\bs{x}_j'X(1)[\Omega_{X,\infty}^{-1}+o_p(1)]L_X(1)\sgn(\beta_0(1))| \right\},
\end{multline*}
and
\begin{multline*}
\mathcal{B}_T^c(Z) =  \left\{ \max_{q_2<j\le n_2}|2T^{-1/2}\bs{z}_j'[M_{Z}(1)+o_p(1)]U| \right.\\
\left.> T^{-1/2}\lambda_1\lambda_{2j}-\lambda_2|T^{-1/2}\bs{z_j}'Z(1)[\Omega_{Z,\infty}^{-1}+o_p(1)]L_Z(1)\sgn(\gamma_0(1))| \right\}.
\end{multline*}

We further consider the event $\mathcal{C}_T(X)=\{\max_{1\leq j\leq q_1} \lambda_{1j} < \beta_*^{-1}\}$ and $\mathcal{C}_T(Z)=\{\max_{1\leq j\leq q_2} \lambda_{2j} < \gamma_*^{-1}\}$, then
\begin{subequations}
	\begin{align}
	\Pr(\mathcal{B}_T^c(X)) &\le \Pr(\mathcal{B}_T^c(X) \cap \mathcal{C}_T) + \Pr(\mathcal{C}_T^c(X)),\label{eq:BcX}\\
	\Pr(\mathcal{B}_T^c(Z)) &\le \Pr(\mathcal{B}_T^c(Z) \cap \mathcal{C}_T) + \Pr(\mathcal{C}_T^c(Z)).\label{eq:BcZ}
	\end{align}
\end{subequations}

By the Weak Irrepresentable Condition, one has inside $\mathcal{C}_T(X)$
\[
T^{-1}\lambda_1|\bs{x}_j'X(1)[\Omega_{X,\infty}^{-1}+o_p(1)]L_X(1)\sgn(\beta_0(1))| \le \frac{\lambda_1(\beta_*\lambda_{1j}-\eta)}{T\beta_*}+o_p(1),
\]
and hence, 
\[
T^{-1}\lambda_1\lambda_{1j}-T^{-1}\lambda_1|\bs{x}_j'X(1)[\Omega_{X,\infty}^{-1}+o_p(1)]L_X(1)\sgn(\beta_0(1))| \le \frac{\lambda_1\eta}{T\beta_*} +o_p(1).
\]
Therefore,
\begin{align*}
	\Pr\left( \mathcal{B}^c_T(X)\cap\mathcal{C}_T(X) \right)&\le \Pr\left( \max_{q_1+1\le j\le n_1}|2T^{-1}\bs{x}_jM_X(1)U| > \lambda_1\eta/T\beta_* \right)+o_p(1)\\
	&\le \frac{4\beta_*^2}{\eta^2}\E[\max_j|T^{-1}\bs{x}_jM_X(1)U|^2]\frac{T^2}{\lambda_1^2}+o_p(1)\\
	&\le \frac{4\beta_*^2\max_j\E|T^{-1}\bs{x}_j'U|^2}{\eta^2}\frac{m_1T^2}{\lambda_1^2}+o_p(1)\\
	&\rightarrow 0,
\end{align*}
where the second line follows by the Chebyschev's inequality, the third line from the fact that for any projection matrix $M$, 
\[
\E|\bs{x}_j'MU|^2= \E|\bs{x}_j'U|^2 - \E|\bs{x}_j(I-M)'U|^2 \le \E|\bs{x}_j'U|^2;
\]
and the last line from assumption \ref{a:lasso} and $q_1=O(1)$. 

Applying the same reasoning to $\mathcal{B}_T^C(Z)\cap\mathcal{C}_T(Z)$, the WIC gives us
\begin{align*}
	\Pr\left( \mathcal{B}^c_T(Z)\cap\mathcal{C}_T(Z) \right)&\le \Pr\left( \max_{q_2+1\le j\le n_2}|2T^{-1/2}\bs{z}_jM_Z(1)U| > \lambda_2\eta/T^{1/2}\gamma_* \right)+o_p(1)\\
	&\le \frac{4\gamma_*^2}{\eta^2}\E[\max_j|T^{-1/2}\bs{z}_jM_Z(1)U|^2]\frac{T}{\lambda_2^2}+o_p(1)\\
	&\le \frac{4\gamma_*^2c_d}{\eta^2}\frac{m_2^{1/d}T}{\lambda_2^2}+o_p(1)\\
	&\rightarrow 0 ,
\end{align*}
where the second line follows from Chebyschev's inequality, the third line by noticing that the $M_Z(1)$ is a projection matrix, which implies
\begin{align*}
	\E\max_j|T^{-1/2}\bs{z}_jM_Z(1)U|^2 & = \E\max_j|T^{-1/2}\bs{z}_j'U|^2\\
	&\le m_2^{1/d}\max_j\left(\E|T^{-1/2}\bs{z}_j'U|^{2d}\right)^{1/d}\\
	&\le m_2^{1/d}c_d.
\end{align*}

Finally, both $\Pr(\mathcal{A}^c_T)$ and $\Pr(\mathcal{B}_T^c)$ converge to $0$ and $\Pr(\mathcal{A}_T\cap\mathcal{B}_T)\rightarrow 1$, proving the theorem.

\end{proof}
\subsection{Proof of theorem \ref{thm:oracle}}
\begin{proof}

Theorem \ref{thm:modsel} tells us that the adaptive Lasso estimator \eqref{eq:adalasso} asymptotically chooses the correct set of non-zero parameters. It remains to show that the distribution of the estimator of the non-zero parameters is the same as the OLS estimator conditional on knowing the correct set of parameters. Write the derivative of the criterion function in \eqref{eq:adalasso} is given by
\begin{equation}
	Q_T(\theta) = -2(Y-W(1)\theta(1))'W(1) + 2(W(2)\theta(2))'W(1) + \lambda(1)L(1)\sgn(\theta(1)),
	\label{eq:dcrit}
\end{equation}
where $L(1)$ and $\lambda(1)$ are as in proposition \ref{p:necessary}. Setting $Q_t(\hat\theta) = 0$, and $U = Y-W(1)\theta_0(1)$, we find 
\begin{equation}
	\Gamma^{1/2}(\hat\theta(1) - \theta_0(1)) = \Omega_{11}^{-1}\Gamma^{-1/2}U'W(1) -\Omega_{11}^{-1}\left[ \Gamma^{-1/2}\hat\theta(2)W(2)'W(1) +\frac{1}{2}\Gamma^{-1/2}\lambda(1)L(1)\sgn(\theta(1))\right],
	\label{eq:estimator}
\end{equation}
which tells us the adaptive Lasso estimator has the same form of a biased OLS estimator, with the bias between square brackets. Hence, \eqref{eq:oracle} is equivalent to showing $\Omega_{11}$ converges to the optimal covariance matrix; $T^{-1}U'X(1)$ has a mixing normal distribution; $T^{-1/2}U'Z(1)$ has a normal distribution and the terms in square brackets converge to zero. 

We have already seen in proof of theorem \ref{thm:modsel} that
\[
\Omega_{11}\Rightarrow
\left(
\begin{array}{cc}
	\int B_{X(1)}B_{X(1)}'dr & \bs{0}\\
	\bs{0}'& \Sigma_{Z(1)^2}
\end{array}
\right).
\]

Since $q_1=O(1)$, it follows from assumption (DGP) that
\[
T^{-1}U'X(1)\Rightarrow	\int B_{X(1)}dB_u.
\]

Using the Cr\'amer-Wold device, one can show that for any $q_2\times 1$ vector $\alpha$ satisfying $\alpha'\alpha\le 1$, $T^{-1/2}\alpha'\E Z(1)'U=0$ and\begin{align*}
	\E\left(T^{-1/2}\alpha'Z(1)'U\right)^2 &= \alpha'\E[Z(1)'UU'Z(1)]\alpha\\
	&\rightarrow \sigma^*_{u^2}\alpha'\Sigma_{Z(1)^2}\alpha.
\end{align*}
where the last line follows from assumption (DGP). Combining the Cr\'amer-Wold device with the Central Limit theorem for dependent processes, one can show that for any constant $c$, $T^{-1/2}c\alpha'Z(1)'U\Rightarrow N(0,c^2\sigma^*_{u^2}\alpha'\Sigma^*_{Z(1)^2}\alpha)$ and therefore $T^{-1/2}Z(1)'U \Rightarrow N(\bs{0},\sigma^*_{u^2}\Sigma_{Z(1)^2}^*)$.

The first term of the bias vanishes because $\hat\theta(2)=o_p(1)$. The second term of the bias is also treated in the proof of theorem \ref{thm:modsel} and is show to be $o(\lambda_2q_2/T^{(1+\rho)/2}+\lambda_1/T^{1+\rho})$. By assumption $\lambda_1/T^{1+\rho}\rightarrow 0$ and $\lambda_2q_2/T^{(1+\rho)/2}\rightarrow 0$. Therefore the bias term converges to zero as $T$ increases.

\end{proof}

\end{document}